
\documentclass[a4paper, amsfonts, amssymb, amsmath, reprint, showkeys, nofootinbib, twoside,notitlepage]{revtex4-1}

\bibliographystyle{apsrev4-1}

\usepackage{amsmath,amstext}
\usepackage[T1]{fontenc}
\usepackage{amssymb}
\usepackage{graphicx}
\usepackage{ae,aecompl}

\DeclareFontFamily{OT1}{pzc}{}
\DeclareFontShape{OT1}{pzc}{m}{it}{<-> s * [1.10] pzcmi7t}{}
\DeclareMathAlphabet{\mathpzc}{OT1}{pzc}{m}{it}

\usepackage{hyperref}
\usepackage{amsmath}
\usepackage{amssymb}
\usepackage{mathtools}
\usepackage{bm}
\usepackage{cleveref}
\usepackage{tensor}
\usepackage{braket}
\usepackage{enumitem}
\usepackage{mhchem}
\usepackage{amsthm}
\usepackage{nccmath}
\usepackage{mathrsfs}
\usepackage{color}

\newcommand{\spc}{\quad \quad \quad}

\newcommand{\K}{{\mathcal{K}}}

\def\be{\begin{equation}}
\def\ee{\end{equation}}
\def\beq{\begin{eqnarray}}
\def\eeq{\end{eqnarray}}

\theoremstyle{definition}

\theoremstyle{theorem}
\newtheorem{theorem}{Theorem}

\begin{document}
\title{Thermoelectric conduction in General Relativity: a causal, stable, and well-posed theory}
\author{L.~Gavassino}
\affiliation{Department of Applied Mathematics and Theoretical Physics, University of Cambridge, Wilberforce Road, Cambridge CB3 0WA, United Kingdom}

\begin{abstract}
We present a covariantly stable first-order framework for describing charge and heat transport in isotropic rigid media embedded in curved spacetime. Working in the Lorenz gauge, we show that the associated initial value problem is both causal and locally well-posed in the fully nonlinear regime. We then apply such framework to explore a range of gravitothermoelectric effects in metals undergoing relativistic acceleration. These include (1) the separation of charge through acceleration, (2) the non-uniformity of Joule heating across accelerating circuits due to time dilation, and (3) the effect of redshift on magnetic diffusion. As an astrophysical application, we derive a relativistic Thomas-Fermi equation governing the charge distribution inside a compact object, also accounting for Seebeck charge displacements driven by cooling.
\end{abstract}

\maketitle
\noindent\textit{\bf Introduction -} Suppose we place an electric circuit just above the event horizon of a black hole, or on the floor of a rocket undergoing extremely large uniform acceleration. How does gravity/acceleration modify the flow of electric current? 

This simple question is the starting point of a long-standing line of research on \textit{gravitoelectric effects} \cite{TolmanStewart1916,DeWittDrag1966,Anandan1984,Ahmedov:1998mif,Ahmedov:1999bqr,Ahmedov:2002bx,Ummarino:2017bvz,WangStewartTolman2023,BeiPhD,Gavassino:2025wcy}, namely phenomena that cause an electric circuit to effectively function as a gravimeter/accelerometer. The most elementary example is the well-known result that, in stationary conditions, it is the redshifted current $I\sqrt{-g_{tt}}$ (rather than $I$ itself) that remains uniform along a wire \cite{Anandan1984}; see figure \ref{fig:redshifcurrent} for a quick proof. Another classic instance is the Schiff-Barnhill effect \cite{ShiffBarhill1966}, according to which a conductor in a gravitational field develops an equilibrium charge separation (the Stewart-Tolman effect \cite{TolmanStewart1916,Bhatia1969,Gavassino:2025wcy} is the same phenomenon with accelerations). While such effects are expected to be relevant primarily in locally charged compact object \cite{Bhatia1969,Bekenstein1971,Zhand1982,Alcock1986,Usov:2004iz,Usov:2004kj,Ray:2003gt,Ray:2004yr,Negreiros:2009fd,Yousaf2016,Carvalho:2018kbt,FRocha2020,Bhatti2020,Arbanil2022,Sharif:2023vpb,Masa2023}, some are sufficiently pronounced to be detectable even in Earth’s weak gravitational field, and have been exploited as precision tests of Einstein’s equivalence principle \cite{Jain1987}.

Despite the broad interest, a fully relativistic generalization of the standard solid-state theory of thermoelectricity (see e.g. \cite[\S 26]{landau8} or \cite[\S 21.C]{SommerfeldStatmech}) is still missing. Certain relativistic fluid theories do contain the relevant transport mechanisms \cite{Hernandez:2017mch,Fotakis:2019nbq,Dommes:2020ktk,Dash:2022xkz,GavassinoShokryMHD:2023qnw}, and we may just take their rigid limit. However, the resulting frameworks are either acausal (and therefore unstable \cite{Hiscock_Insatibility_first_order,Kost2000,GavassinoLyapunov_2020,GavassinoSuperlum2021}) or introduce a large number of additional ``second order'' transport coefficients whose values are generally unknown for relativistic materials. More importantly, none of the available models is known to admit a well-posed initial value formulation. In other words, it remains unclear whether their equations are mathematically solvable at all, particularly for objects undergoing relativistic rotation \cite{GavassinoAntonelli:2025umq}.

Here, we adopt the Benfica-Disconzi-Noronha-Kovtun (BDNK) strategy \cite{Bemfica2019_conformal1,Kovtun2019,BemficaDNDefinitivo2020} to construct, for the first time, a causal, stable, and well-posed theory for thermoelectricity in relativistic media, which reduces to the standard ``textbook'' framework \cite[\S 26]{landau8}\cite[\S 21.C]{SommerfeldStatmech} in the Newtonian limit. We then illustrate its predictive power by discussing several applications in simplified geometries.

\noindent\textit{\bf Conventions -} The metric is treated as a fixed background, and has signature $(-,+,+,+)$. We work in natural units, with $c\,{=}\,\hbar\,{=}\,k_B\,{=}\,1$, and $e^2{=}\,4\pi/137$ \cite[\S 58]{Srednicki_2007}. 

\begin{figure}
    \centering
    \includegraphics[width=0.6\linewidth]{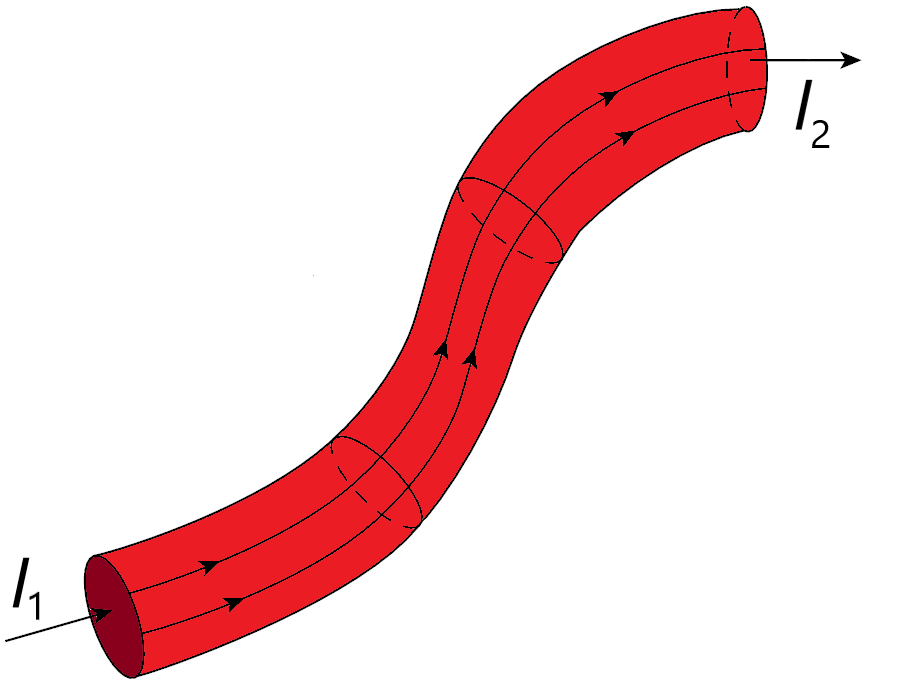}
\caption{Continuity of the redshifted current. Under stationary conditions, the Lie derivative $\mathfrak{L}_{\partial_t} J$ vanishes, where $J^\nu$ is the current density, and $\partial_t$ is a Killing vector. Combining this with the conservation law $\nabla_\mu J^\mu{=}0$ and the Killing property $\nabla_\mu (\partial_t)^\mu{=}0$, we find that $\nabla_\mu [(\partial_t)^\mu J^\nu{-}J^\mu (\partial_t)^\nu]=0$. Integrating over the portion of wire in the figure, and applying Stokes' theorem, we obtain $\int_{\text{Area}_1} J^\mu(\partial_t)^\nu dS_{\mu \nu}=\int_{\text{Area}_2} J^\mu(\partial_t)^\nu dS_{\mu \nu}$, or, equivalently, $I_1 ||\partial_t||_1=I_2 ||\partial_t||_2$.}
    \label{fig:redshifcurrent}
\end{figure}

\noindent\textit{\bf Derivation of the theory -} Following \cite{Anandan1984}, consider a conductive medium, whose positive ions are constrained to move with a macroscopic velocity field $u^\mu$ that is Born-rigid, i.e. proportional to a timelike Killing vector $\K^\mu\,{=}\,\K u^\mu$ \cite{Born1909,Herglotz1910,Noether1910}. In thermodynamic equilibrium, all fields, including the Faraday tensor $F_{\mu\nu}{=}\nabla_\mu A_\nu{-}\nabla_\nu A_\mu$, are stationary, i.e. $\mathfrak{L}_\K \text{``fields''}\,{=}\,0$, and we have that \cite{Hernandez:2017mch}
\vspace{-0.1cm}
\begin{equation}\label{gringoo}
\nabla_\mu (\K T)=0 \, , \spc \nabla_\mu (\K\mu)-\K\mathcal{E}_\mu=0\, , 
\end{equation}
where $T$ is the temperature, $\mu$ is the electric charge chemical potential, and $\mathcal{E}_\mu{=}F_{\mu \nu}u^\nu$ the electric field in the conductor's local rest frame \cite[\S 17.2.3]{special_in_gen}. Moreover, despite the possible presence of charge separation, there is no current or heat flow in equilibrium \cite{GavassinoShokryMHD:2023qnw,Gavassino:2025wcy}, i.e.
\vspace{-0.1cm}
\begin{equation}
\begin{split}
J^\mu ={}& \rho_{\text{lte}}(T,\mu;\text{``Ion configuration''})\, u^\mu\, , \\
W^\mu ={}&   \varepsilon_{\text{lte}}(T,\mu;\text{``Ion configuration''})\, \K^\mu \, ,\\
\end{split}
\end{equation}
where ``lte'' stands for local thermodynamic equilibrium. Here, $J^\mu$ is the electric current, and $W^\mu =-T_\text{mat}^{\mu \nu}\K_\nu$, with $T_\text{mat}^{\mu \nu}$ is the matter stress-energy tensor \cite{GavassinoAntonelli:2025umq}.


\newpage
Now, suppose the medium is \textit{not} in thermodynamic equilibrium (but the ions still move rigidly). Then, we can model corrections to $J^\mu$ and $W^\mu$ as linear responses to the left sides of \eqref{gringoo} not being zero, with susceptibilities $\sigma_i,\kappa_i\,{=}\,\mathcal{O}(\tau)$ ($\tau=\,$``electron mean free path''). Then, the most general isotropic first-order theory is
\vspace{-0.1cm}
\begin{equation}\label{original}
\begin{split}
\K J^\mu {=}\rho_{\text{lte}} \K^\mu & {-}(\sigma_1 g^{\mu \nu}{+}\sigma_2 u^\mu u^\nu) [\nabla_\nu (\K\mu){-}\K\mathcal{E}_\nu] \\
& {-}(\sigma_3 g^{\mu \nu}{+}\sigma_4 u^\mu u^\nu) \nabla_\nu (\K T) +\mathcal{O}(\tau^2) , \\
W^\mu {=}\varepsilon_{\text{lte}} \K^\mu & {-}(\kappa_1 g^{\mu \nu}{+}\kappa_2 u^\mu u^\nu) [\nabla_\nu (\K\mu){-}\K \mathcal{E}_\nu] \\
& {-}(\kappa_3 g^{\mu \nu}{+}\kappa_4 u^\mu u^\nu) \nabla_\nu (\K T) +\mathcal{O}(\tau^2). \\
\end{split}
\end{equation}
But let us recall that, out of equilibrium, $T$ and $\mu$ are not uniquely defined, as the electrons' statistical distribution receives $\mathcal{O}(\tau)$-corrections, and there are many ways to approximate its shape with a Maxwellian fit \cite{KovtunTemperature2022vas,Weinberg1971,GavassinoInfiniteOrded2024pgl}. This allows us to introduce field redefinitions of the form \cite{GavassinoAntonelli:2025umq}
\vspace{-0.1cm}
\begin{equation}\label{hydrframe}
\begin{split}
T={}& \Tilde{T}+\tau_1 u^\nu \nabla_\nu \Tilde{T} +\tau_2 u^\nu \nabla_\nu \Tilde{\mu}+\mathcal{O}(\tau^2) \, , \\
\mu={}& \Tilde{\mu}+\tau_3 u^\nu \nabla_\nu \Tilde{T} +\tau_4 u^\nu \nabla_\nu \Tilde{\mu}+\mathcal{O}(\tau^2) \, , \\
\end{split}
\end{equation}
for some $\tau_i {=}\mathcal{O}(\tau)$. With a field-redefinition of this kind, we can always set $\sigma_2{=}\sigma_4{=}\kappa_2{=}\kappa_4{=}0$, as there are 4 fixable $\tau_i$'s in \eqref{hydrframe}. Neglecting the $\mathcal{O}(\tau^2)$, we finally obtain
\vspace{-0.1cm}
\begin{equation}\label{themain}
\begin{split}
\K J^\mu ={}& \rho_{\text{lte}} \K^\mu  -\sigma_1 [\nabla^\mu (\K\mu){-}\K\mathcal{E}^\mu] -\sigma_3 \nabla^\mu (\K T) \, , \\
W^\mu ={}& \varepsilon_{\text{lte}} \K^\mu  -\kappa_1 [\nabla^\mu (\K\mu){-}\K \mathcal{E}^\mu] -\kappa_3  \nabla^\mu (\K T) \, . \\
\end{split}
\end{equation}
In the stationary limit, \eqref{themain} reduces to the theory of \cite{Anandan1984}, which allows one to assign physical interpretations to the transport coefficients (e.g. $\sigma_1 =$ ``electric conductivity'').

\noindent\textit{\bf Causality and well-posedness -} The vanishing of $\{\sigma_{2i},\kappa_{2i}\}$ allows us to establish several self-consistency results.
\vspace{-0.1cm}
\begin{theorem}
Suppose that \eqref{themain} holds, where $\sigma_i(T,\mu,x^\alpha)$ and $\kappa_i(T,\mu,x^\alpha)$ are such that $\sigma_1\kappa_3{-}\sigma_3 \kappa_1{\neq} 0$ everywhere. Then, in the Lorenz gauge \textup{(}$\nabla_\mu A^\mu \,{=}\, 0$\textup{)}, Maxwell's system
\vspace{-0.2cm}
\begin{equation}\label{maxwell}
\begin{split}
& \nabla_\mu F^{\nu \mu} =J^\nu \, , \\
& \nabla_\mu J^{\mu} =0 \, , \\
& \nabla_\mu W^{\mu} =\K \mathcal{E}_\mu J^\mu \, , \\
\end{split}
\end{equation} 
defines a well-posed Cauchy problem for the variables $\Psi{=}\{A^\nu,\mu,T\}$, given initial data on spacelike surfaces. Furthermore, the resulting dynamics propagate signals precisely at the speed of light.
\end{theorem}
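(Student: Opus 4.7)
My plan is to reduce the system (\ref{maxwell}), after substituting the constitutive relations (\ref{themain}), to a quasilinear system of six second-order PDEs for the fields $\Psi = (A^\nu,\mu,T)$, and then to read off both claims from its principal symbol. In Lorenz gauge the first of (\ref{maxwell}) becomes a wave equation $\Box A^\nu - R^\nu{}_\mu A^\mu = -J^\nu[\Psi]$, whose source depends on $\mu,T$ only through first derivatives (via the $\sigma_{1,3}$ terms in (\ref{themain})) and on $A$ only through first derivatives (via $\mathcal{E}^\nu = F^{\nu\mu}u_\mu$). Expanding $\nabla_\mu J^\mu$ shows that the continuity equation is second order, with principal part $-\sigma_1\Box\mu - \sigma_3\Box T + \sigma_1 u_\nu \Box A^\nu$, the last piece coming from $\sigma_1\nabla_\mu\mathcal{E}^\mu$ after using Lorenz gauge on $\nabla_\mu F^{\mu\nu}$. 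The same exercise applied to the energy balance produces $-\kappa_1\Box\mu - \kappa_3\Box T + \kappa_1 u_\nu\Box A^\nu$ at leading order.

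Collecting these into a $6{\times}6$ principal symbol $P(k)$ acting on $\delta\Psi = (\delta A^\nu,\delta\mu,\delta T)$ yields a block lower-triangular structure,
\begin{equation*}
P(k) \;=\; k^2 \begin{pmatrix} \delta^\nu_\lambda & 0 & 0 \\ \sigma_1 u_\lambda & -\sigma_1 & -\sigma_3 \\ \kappa_1 u_\lambda & -\kappa_1 & -\kappa_3 \end{pmatrix}, \qquad k^2 := g^{\alpha\beta}k_\alpha k_\beta,
\end{equation*}
with determinant $(k^2)^6\,(\sigma_1\kappa_3 - \sigma_3\kappa_1)$. Under the standing hypothesis $\sigma_1\kappa_3 \neq \sigma_3\kappa_1$ the non-wave block is invertible, so the characteristic equation $\det P(k) = 0$ reduces to $k^2 = 0$: every characteristic hypersurface is a null cone, hence the dynamics are causal and propagate signals at exactly the speed of light. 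Multiplying the system on the left by the inverse of the non-singular $2{\times}2$ block turns it into a coupled system of six wave equations with lower-order coupling, for which local well-posedness on any globally hyperbolic neighborhood of a spacelike Cauchy slice follows from the classical theory of quasilinear wave systems (Leray, Choquet--Bruhat).

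The step I expect to be most delicate is reconciling the Lorenz gauge with the evolution, because the identity $\nabla_\nu \nabla_\mu F^{\nu\mu} \equiv 0$ builds a redundancy between the first and second of (\ref{maxwell}) that we must exploit consistently. My plan is to pick Cauchy data satisfying $\nabla_\mu A^\mu = 0$ on the initial surface together with $\nabla_\mu F^{0\mu} = J^0$ (the Hamiltonian constraint, which fixes the initial value of $\partial_t \nabla_\mu A^\mu$), solve the reduced wave system, and then verify that $\phi := \nabla_\mu A^\mu$ satisfies a homogeneous linear wave equation. Taking the divergence of $\Box A^\nu - R^\nu{}_\mu A^\mu = -J^\nu$ and carrying out the standard commutator manipulation with the contracted Bianchi identity reduces the result to $\Box \phi = -\nabla_\nu J^\nu$, which vanishes on solutions because continuity is enforced in (\ref{maxwell}); zero Cauchy data for $\phi$ then force $\phi \equiv 0$ in the domain of dependence, recovering a genuine solution of the original system. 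The one algebraic check I would still want to carry out carefully is that the many subleading contributions (derivatives of $\rho_{\text{lte}}, \varepsilon_{\text{lte}}, \mathcal{K}, u^\mu$ and curvature pieces) really stay below the principal order and do not sneak a new second-order term into the continuity or energy equations that would disturb the factorization of $P(k)$ above.
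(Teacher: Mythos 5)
Your proposal is correct and follows essentially the same route as the paper: in Lorenz gauge the first Maxwell equation becomes a wave equation for $A^\nu$, and the invertibility of the matrix with entries $\sigma_1,\sigma_3,\kappa_1,\kappa_3$ lets one isolate $\nabla_\mu\nabla^\mu\mu$ and $\nabla_\mu\nabla^\mu T$, after which the standard quasilinear wave-system theory applies --- the only organizational difference is that the paper removes your $u_\nu\Box A^\nu$ coupling at the outset by substituting the on-shell identity $\nabla_\mu\mathcal{E}^\mu=-u_\nu J^\nu+F^{\mu\nu}\nabla_\mu u_\nu$, which is equivalent to inverting your block lower-triangular principal symbol. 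Your extra verification that the Lorenz gauge propagates off the initial slice (closing the homogeneous wave equation for $\nabla_\mu A^\mu$ via $\nabla_\mu J^\mu=0$) is left implicit in the paper and is a worthwhile addition.
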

\begin{proof}
It will be enough to show that the system \eqref{maxwell} takes the form $\nabla_\mu \nabla^\mu \Psi =\mathcal{F}(\Psi,\nabla_\alpha\Psi,x^\alpha)$, as then we can invoke Theorem 10.1.3 of \cite{Wald}. To this end, we recall that, in the Lorenz gauge, the first line of \eqref{maxwell} reads $\nabla_\mu \nabla^\mu A^\nu =R^\nu _\mu A^\mu {-}J^\nu$, where $R_{\mu \nu}$ is the Ricci curvature \cite[\S 22.4]{MTW_book}. This already has the form we want, since $J^\mu$ only contains first derivatives. The second and third lines can be rearranged, with a bit of algebra, in the form
\vspace{-0.1cm}
\begin{equation}\label{complichiamo}
\begin{bmatrix}
\sigma_1 & \sigma_3 \\
\kappa_1 & \kappa_3 \\
\end{bmatrix}
\begin{pmatrix}
\nabla_\mu \nabla^\mu \mu \\
\nabla_\mu \nabla^\mu T \\
\end{pmatrix}
= \text{``terms with at most }\nabla\Psi\text{''} \, .
\end{equation}
To arrive here, one needs to use the product rule, and the identity $\nabla_\mu \mathcal{E}^\mu=-u_\nu J^\nu +F^{\mu \nu}\nabla_\mu u_\nu$. Now, the condition $\sigma_1\kappa_3-\sigma_3 \kappa_1\neq 0$ guarantees that the matrix in \eqref{complichiamo} is invertible everywhere, allowing us to isolate $\nabla_\mu \nabla^\mu \Psi$.
\end{proof}

\newpage
\noindent\textit{\bf Entropy -} To first order in the mean free path, the entropy current of the conductor reads \cite{Israel1981}
\begin{equation}\label{entroycurrent}
\begin{split}
\K s^\mu = s_\text{lte} \K^\mu &{-}\dfrac{\kappa_1{-}\mu\sigma_1}{T} [\nabla^\mu (\K\mu){-}\K\mathcal{E}^\mu] \\ &{-}\dfrac{\kappa_3{-}\mu\sigma_3}{T} \nabla^\mu (\K T)+\mathcal{O}(\tau^2) \, .\\
\end{split}
\end{equation}
Its four-divergence can be computed with the aid of \eqref{maxwell}. Defined
$V^\mu =[\K^{-1}\nabla^\mu(\K \mu){-}\mathcal{E}^\mu, \K^{-1}\nabla^\mu(\K T)]$, we have
\begin{equation}\label{secondlaw}
T\nabla_\mu s^\mu =V^\mu
\begin{bmatrix}
\sigma_1 & \sigma_3 \\
\\
\dfrac{\kappa_1{-}\mu\sigma_1}{T} & \dfrac{\kappa_3{-}\mu\sigma_3}{T} \\
\end{bmatrix}V^T_\mu +\mathcal{O}(\tau^2) \, .
\end{equation}
Then, Onsager's principle \cite[\S 14.5]{Callen_book} immediately tells us that $T\sigma_3=\kappa_1{-}\mu\sigma_1$, in agreement with \cite{Bajec:2025dqm}. Moreover, the second law of thermodynamics requires that the $2{\times}2$ (symmetric) matrix above be positive definite. In fact, the conservation laws in \eqref{maxwell} imply that $\K_\mu V^\mu {=}\mathcal{O}(\tau)$ onshell, so $V^\mu$ is effectively spacelike in \eqref{secondlaw}, up to an error that scales like $\mathcal{O}(\tau^2)$. The resulting inequalities are
\begin{equation}
\sigma_1>0 \, , \spc \kappa_3-\mu\sigma_3-T\sigma_3^2/\sigma_1>0 \, ,
\end{equation}
the latter being interpretable as the positivity of the heat conductivity \cite[\S 26]{landau8}.

\noindent\textit{\bf Covariant stability -} Since the theory is causal, stability in the rest frame implies covariant stability \cite{GavassinoSuperlum2021,GavassinoBounds2023myj}. Hence, we linearize \eqref{maxwell} around a uniform charge-neutral state in Minkowski space, with $\K{=}1$ and $u^\mu{=}(1,0,0,0)$. By analysing the ``curl'' Maxwell equations, we find that there are 4 transversal quasinormal modes, all of which obey a causal magnetic diffusion equation:
\begin{equation}\label{magneticdiffusion}
\partial_t \mathcal{B}_j =\dfrac{1}{\sigma_1}\partial_\mu \partial^\mu \mathcal{B}_j \, ,
\end{equation}
which is stable \cite{GavassinoBounds2023myj}. The remaining 4 quasinormal modes are longitudinal and have no magnetic field. 
Their evolution can be studied by linearizing $\nabla_\mu J^\mu=0$ and \eqref{secondlaw}. This results in the system
\begin{equation}\label{longtudo}
\left\{ \!
\begin{bmatrix}
\chi_{\mu \mu} & \chi_{\mu T} \\
\chi_{T\mu} & \chi_{TT} \\
\end{bmatrix}\!\partial_t{-}\!
\begin{bmatrix}
\sigma_1 & \sigma_3 \\
\sigma_3 & \sigma_5 \\
\end{bmatrix}\! \partial_\mu \partial^\mu 
\right\}\!
\begin{bmatrix}
\delta \mu \\
\delta T \\
\end{bmatrix}
{+}
\begin{bmatrix}
\sigma_1\\
\sigma_3\\
\end{bmatrix} \!
\partial_\mu \mathcal{E}^\mu
{=}\,0,
\end{equation}
where $\chi_{ij}$ is the susceptibility matrix $\partial(\rho,s)/\partial(\mu,T)$, and we defined $T\sigma_5=\kappa_3{-}\mu\sigma_3$. Let us note that the two matrices in \eqref{longtudo} are both symmetric and positive definite, and thus can be rewritten as $N\Lambda N^T$ and $NN^T$ respectively, with $N$ invertible and $\Lambda=\text{diag}(\lambda_1,\lambda_2)$, $\lambda_i>0$. Hence, assuming a plane-wave form $\Psi \propto e^{\Gamma t+ikz}$ ($k\in \mathbb{R}$), and invoking the equation $\partial_\mu \mathcal{E}^\mu=\delta J^0$, \eqref{longtudo} becomes
\begin{equation}\label{Ups}
\begin{split}
&[\Gamma^2+(\Pi{+}\Lambda) \Gamma +\Pi \Lambda+k^2]\Upsilon =0\, ,\\
\text{with }\quad &\Upsilon= N^T \begin{bmatrix}\!
\delta \mu \\
\delta T \\
\end{bmatrix}\, , \quad \Pi= N^T\begin{bmatrix}
1 & 0 \\
0 & 0 \\
\end{bmatrix}
N \, .\\
\end{split}
\end{equation}
In the Supplementary Material, we prove that, as long as $\Pi$ and $\Lambda$ are non-negative definite (which they are), the solutions of \eqref{Ups} always have $\mathfrak{Re}\Gamma\leq 0$, i.e. are stable. 

The regime of validity and physical content of each mode is discussed in the Supplementary Material.


\newpage
\noindent\textit{\bf Setup for applications 1,2,3 -} In what follows, we apply our theory to investigate extreme gravitoelectric phenomena arising at high accelerations. To this end, we discuss the behavior of rigid pieces of equipment aboard a rocket undergoing hyperbolic (i.e. uniformly accelerated) motion in Minkowski space, see Figure \ref{fig:rocket}. As usual, we work in Rindler coordinates ($ds^2{=}-g^2z^2 dt^2{+}dx^2{+}dy^2{+}dz^2$), so that the local velocity of the equipment aligns with the Killing vector $\partial_t$, and thus $\K =gz$.
\begin{figure}[t!]
    \centering
    \includegraphics[width=0.8\linewidth]{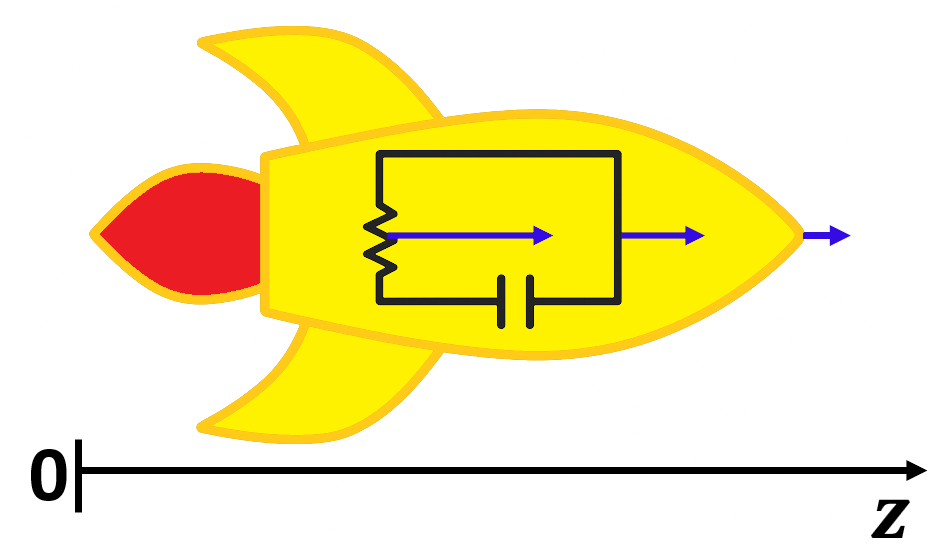}
\caption{A rocket containing a circuit undergoes rigid motion in Minkowski space ($ds^2={-}d\mathcal{T}^2{+}dx^2{+}dy^2{+}d\mathcal{Z}^2$), with velocity $u^\mu$ proportional to the generator of boosts, $\mathcal{Z}\partial_\mathcal{T}{+}\mathcal{T}\partial_\mathcal{Z}$. In Rindler coordinates ($\mathcal{T}{=}z\sinh(gt)$, $\mathcal{Z}{=}z\cosh(gt)$), we have $u^\mu\partial_\mu =(gz)^{-1}\partial_t$, and the proper acceleration  (blue arrows) of a piece of equipment is $a\,{=}\,1/z$, which increases towards the rear of the rocket, and diverges at Rindler's horizon ($z=0$).}
    \label{fig:rocket}
\end{figure}

\noindent\textit{\bf Application 1: Inertia of the electron -} We consider a piece of metal with uniform transport properties  ($\sigma_1{=}\text{const}$), which is externally kept in an isothermal state ($\K T=\text{const}$, $\nabla_\mu W^\mu=\K \mathcal{E}_\mu J^\mu +\text{``external heat''}$). Assuming invariance under $xy$-translations, we ask: How does the charge tend to arrange? To answer this question, we assume that $F=F_{tz}(t,z) dt\wedge dz$, which trivially solves the homogeneous Maxwell's equations ($dF=0$), and we write $\nabla_\mu F^{\nu \mu}=J^\nu$ explicitly, which gives
\begin{equation}\label{hyperbolz}
\begin{split}
\partial_z \mathcal{E}^z={}& \rho_\text{lte}+\sigma_1 (gz)^{-1}\partial_t\mu \, , \\
\partial_t \mathcal{E}^z={}& \sigma_1 [\partial_z(gz\mu)-gz\mathcal{E}^z] \, . \\
\end{split}
\end{equation}
To close the system, we need an equation relating $\rho_\text{lte}$ and $\mu$. We posit $\mu \,{=}\,{-}\mu_0\,{+}\, b^2\rho_\text{lte}$ (with $\mu_0$ and $b$ positive constants, estimated in the Supplementary Material), which is valid when $\rho_\text{lte}$ is small and the positive charges are equally spaced.  
Moreover, instead of solving \eqref{hyperbolz} exactly, we recall that $\sigma_1\K^\mu \nabla_\mu \mu =\mathcal{O}(\tau^2)$, so the term $\partial_t\mu$ can be neglected, and we obtain a single parabolic equation: 
\begin{equation}\label{ezparab}
\dfrac{\partial_t \mathcal{E}^z}{g\sigma_1}=-\mu_0 +b^2\, \partial_z (z\partial_z \mathcal{E}^z)-z\mathcal{E}^z \, .
\end{equation}
We solve this equation numerically, assuming local charge neutrality at $t=0$, and requiring that no charge leaves the chunk of metal, so $\mathcal{E}^z(\text{outside})=0$. The resulting evolution is shown in figure \ref{fig:parabolic}. Over time, the electrons \textit{fall behind}, due to their inertia, and they accumulate on the rear. This stops when a sufficiently strong electric field is formed, which counteracts the inertia. This is an instance of the Stewart-Tolman effect \cite{TolmanStewart1916} in relativity. 
\newpage

\begin{figure}
    \centering
\includegraphics[width=0.99\linewidth]{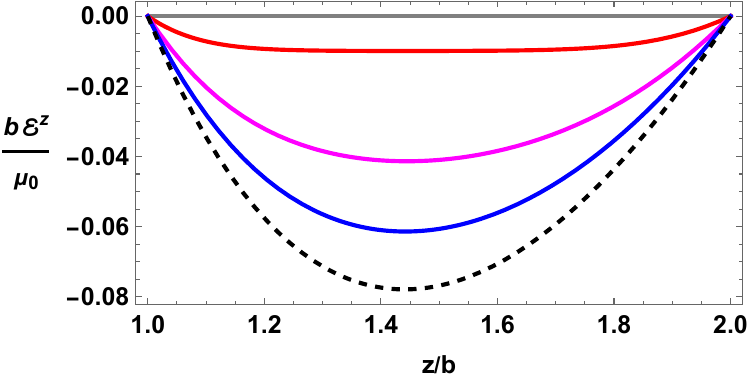}
\includegraphics[width=0.99\linewidth]{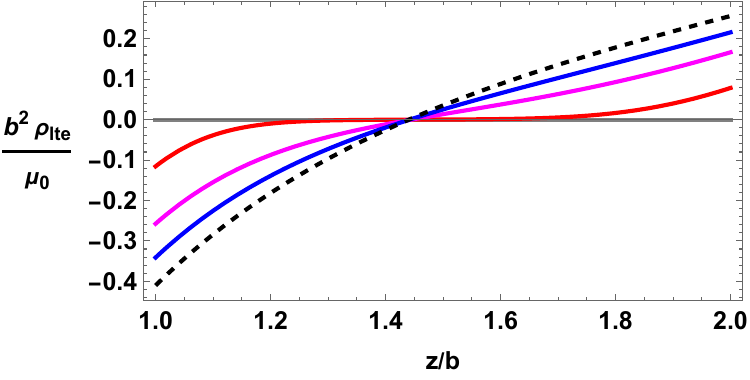}
    \caption{Electric field (upper panel) and charge density (lower panel) of a solution of \eqref{ezparab} with $\mathcal{E}^z(t{=}0)=\mathcal{E}^z(\text{boundary})=0$, for a piece of metal extending from $z=b$ to $z=2b$. The curves represent snapshots at $g\sigma_1 b t=0$ (gray), $0.01$ (red), $0.05$ (magenta), $0.1$ (blue), $\infty$ (dashed).}
    \label{fig:parabolic}
\end{figure}

\noindent\textit{\bf Application 2: Time dilation and Joule heating -} From Fig.~\ref{fig:redshifcurrent}, we know that a stationary current $I$ in an accelerated circuit scales like $1/z$, by time dilation. But since the Joule power output grows like $I^2\sim 1/z^2$, we expect wires to get warmer towards the rear of the rocket, by an amount that exceeds the redshift effect (which only predicts $T\sim 1/z$). Let us quantify this effect.
\begin{figure}[t!]
    \centering
\includegraphics[width=0.9\linewidth]{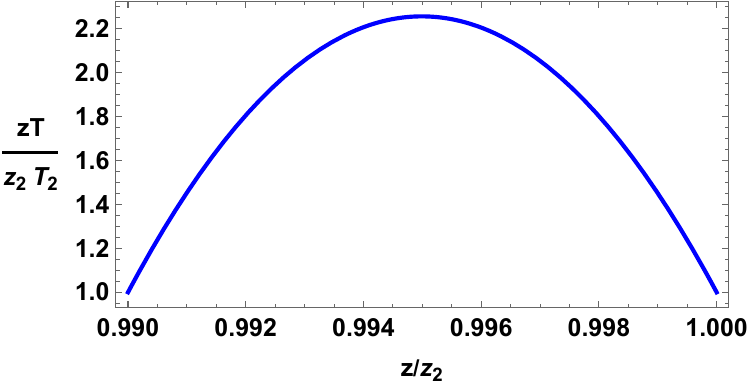}
\includegraphics[width=0.9\linewidth]{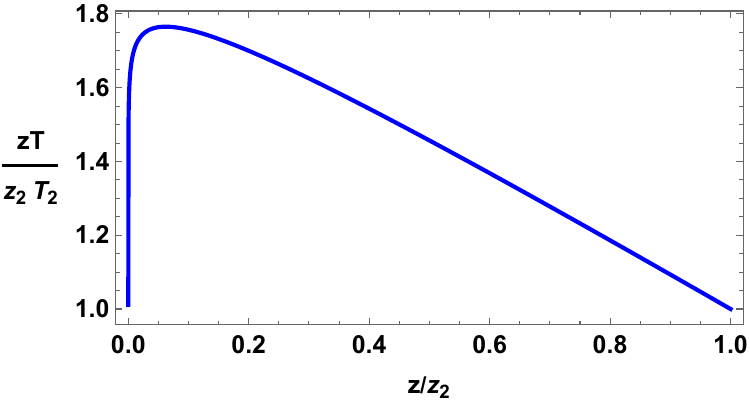}
\caption{Solutions of \eqref{dzdzT} with boundary conditions $z_1T(z_1){=}z_2 T(z_2)$. In the upper panel, we set $z_1/z_2{=}0.99$ and $\mathcal{J}^2/[\sigma_1\kappa_3T(z_2)]{=}10^5$. In the lower panel, we set $z_1/z_2{=}10^{-7}$, and $\mathcal{J}^2/[\sigma_1\kappa_3T(z_2)]=1$. }
    \label{fig:temeprature}
\end{figure}

Consider an isolated, stationary, non-equilibrium system, where two pieces of equipment located respectively at $z_1$ and $z_2$, and having the same redshifted temperature ($z_1T_1=z_2T_2$), are connected by a straight cable crossed by a uniform redshifted current density $zJ^z {=}\text{const} {\equiv} \mathcal{J}$. Let us estimate the temperature along the cable. Assuming $\sigma_3{=}0$ and $\kappa_3{=}\text{const}$, equations \eqref{entroycurrent} and \eqref{secondlaw} become
\vspace{-0.1cm}
\begin{equation}\label{sz}
zs^z{=}-\dfrac{\kappa_3}{T} \partial_z (zT) \, ,\quad
\dfrac{T}{z}\partial_z(zs^z)=\dfrac{\mathcal{J}^2}{\sigma_1 z^2}\,{+}\,\dfrac{\kappa_3}{Tz^2} [\partial_z(zT)]^2 .
\end{equation}
Combining them together, we arrive at the equation
\vspace{-0.1cm}
\begin{equation}\label{dzdzT}
\partial_z\left[z\partial_z(zT)\right]=-\dfrac{\mathcal{J}^2}{\sigma_1\kappa_3} \, ,
\end{equation}
to be solved with boundary conditions $T(z_1)\,{=}\,T_1$ and $T(z_2)\,{=}\,T_2$. In figure \ref{fig:temeprature}, we provide two solutions: One where the acceleration is negligible ($z_1\, {\rightarrow}\, \infty$), and one where it matters ($z_1\,{\rightarrow}\, 0$). In the former, the cable’s hottest spot lies at its midpoint, which is the farthest point from the thermostatic boundaries, where the cable gives off its heat. In the latter, the peak in redshifted temperature has moved notably to the left, as predicted.

\newpage
\noindent\textit{\bf Application 3: Redshifted magnetic diffusion -} Let us see how the magnetic diffusion equation \eqref{magneticdiffusion} changes on our rocket.
We consider a transversal electromagnetic wave whose four-potential is $A=A_x(t,z)dx$. Then, the Faraday tensor is $F=\partial_t A_x dt\wedge dx+\partial_z A_x dz \wedge dx$, and the only relevant Maxwell equation is $J^x=\nabla_\mu F^{x \mu}$. Since nothing depends on $x$, we have $J^x{=}\sigma_1\mathcal{E}^x$, and thus 
\begin{equation}\label{ptAeq1s1}
\partial_t A_x=\dfrac{1}{\sigma_1} \partial_z \left[ gz\partial_z A_x\right] -\dfrac{1}{\sigma_1 g z} \partial^2_t A_x \, ,
\end{equation}
where we may neglect the term with the second time derivative.
Applying $\partial_z$ on both sides, and recalling that $\mathcal{B}^y=-\varepsilon^{yzxt}F_{zx}u_t=\partial_z A_x$ \cite{special_in_gen}, we arrive at
\begin{equation}\label{grasch}
\partial_t \mathcal{B}^y=\dfrac{1}{\sigma_1} \partial_z^2 \left( gz\mathcal{B}^y\right) \, ,
\end{equation}
which is the magnetic diffusion equation we were looking for. Its stationary solutions take the form $\mathcal{B}^y= c_1+c_2/z $, so both $\mathcal{B}^y=\text{const}$ (a uniform magnetic field) and $gz\mathcal{B}^y=\text{const}$ (a uniform redshifted magnetic field) are admissible steady states. The question is then which of these represents the \textit{true} equilibrium, i.e. the one that does not rely on continuous dissipation to be sustained. The answer becomes evident once we observe that the solution of \eqref{ptAeq1s1} consistent with $\mathcal{B}^y= c_1+c_2/z $ is $A_x = c_1 \left(z+gt/{\sigma_1}\right) + c_2 \ln z $, implying that the entropy production is $\nabla_\mu s^\mu \propto (J_x)^2 \propto (\mathcal{E}_x)^2 \propto (\partial_t A_x)^2 \propto (c_1)^2$. This shows that keeping $\mathcal{B}^y$ uniform necessarily entails a continuous dissipation of heat, while the true diffusive equilibrium is instead the configuration in which the \textit{redshifted} magnetic field $gz\mathcal{B}^y$ is uniform.

\noindent\textit{\bf Application 4: Charged relativistic stars -} Let us step out of our rocket, and land on a problem of astrophysical interest. In most descriptions of charged compact objects, one assumes either $\rho_\text{lte}=\text{const} \times \varepsilon_\text{lte}$ (for bulk charge), or $\rho_\text{lte}\,{\sim}\, e^{-(r-R_\star)^2/b^2}$ (for surface charge) \cite{Ray:2003gt,Ray:2004yr,Negreiros:2009fd,Carvalho:2018kbt,FRocha2020,Arbanil2022}. These choices are made for convenience, since modeling the actual charge stratification would require a fully relativistic Thomas-Fermi equation for the electron cloud. Our theory produces such an equation naturally.

We work in a static, spherically symmetric background spacetime ($ds^2\,{=}\,-e^{2\Phi(r)}dt^2\,{+}\,e^{2\Lambda(r)}dr^2\,{+}\,r^2 d^2\Omega$), and we assume that the Faraday tensor is also spherically symmetric (but not necessarily static), so that $F\,{=}\,F_{tr}(t,r)\,dt\,{\wedge} \,dr$, which automatically fulfills $dF=0$. Then, the non-trivial Maxwell equations are $\nabla_\mu F^{t\mu}\,{=}\,J^t$ and $\nabla_\mu F^{r\mu}\,{=}\,J^r$, which explicitly read
\vspace{-0.1cm}
\begin{equation}\label{maxsphere}
\begin{split}
&\dfrac{e^{-\Lambda}}{r^2} \partial_r \left(r^2 e^{\Lambda}\mathcal{E}^r\right)= \rho_\text{lte} +\sigma_1 e^{-\Phi}\partial_t \mu+\sigma_3 e^{-\Phi}\partial_t T \, ,\\
&e^{-\Phi}\partial_t \mathcal{E}_r =\sigma_1\left[e^{-\Phi}\partial_r (e^\Phi \mu){-}\mathcal{E}_r\right]\!{+}\sigma_3e^{-\Phi}\partial_r \left(e^\Phi T\right)\, .\\
\end{split}
\end{equation}
In a time-independent scenario, one can combine the two lines of \eqref{maxsphere} into a single equation:
\vspace{-0.1cm}
\begin{equation}\label{debyeHuck}
\dfrac{e^{-\Lambda}}{r^2}\partial_r \left\{r^2 e^{-\Lambda-\Phi}\left[\partial_r \left( e^\Phi \mu\right)+\dfrac{\sigma_3}{\sigma_1}\partial_r \left( e^\Phi T\right)\right]\right\}=\rho_\text{lte} \, ,
\end{equation}
which is of Thomas-Fermi type \cite{Thomas_1927,Fermi1927}.
Gradients of $e^\Phi T$ are often present since stars cool down or accrete. The Seebeck coefficient $-\sigma_3/\sigma_1$ (or ``thermopower'') has been computed from microphysics in several references \cite{Geppert1991,Yakovlev1984,Schmitt:2017efp,Harutyunyan:2024hsd,Bajec:2025dqm}, but the most common expression in neutron star models is \cite{Gakis:2024obw} ($n_e =$ ``electron number density'')
\begin{equation}
\dfrac{\sigma_3}{\sigma_1} =-\left(\dfrac{\pi^4}{3n_e e^3} \right)^{1/3} T \, .
\end{equation}
In the cold limit, we can neglect the Seebeck correction, and \eqref{debyeHuck} becomes a differential equation for $\mu$ alone,
to be solved with boundary conditions $[\partial_r\mu]_{r=0}\,{=}\,0$ and $4\pi R_\star^2 \left[e^{-\Lambda-\Phi}\partial_r \left( e^\Phi \mu\right)\right]_{r=R_\star} =Q=\text{``total charge''}$. 

In figure \ref{fig:compactness}, we test the assumption $\rho_\text{lte}=\text{const} \times \varepsilon_\text{lte}$ against our theory. Specifically, we consider a low temperature, constant density star ($\varepsilon_\text{lte}{=}\text{const}$) with compactness $C{=}0.33$, corresponding to the metric \cite[\S 11.6]{Weinberg_book_1972}
\vspace{-0.1cm}
\begin{equation}
\begin{split}
 e^{-\Lambda(r)}={}& \sqrt{1{-}2C \tfrac{r^2}{R_\star^2}\,} \, , \\ 
 e^{\Phi(r)}={}& \tfrac{3}{2}e^{-\Lambda(R_\star)}-\tfrac{1}{2}e^{-\Lambda(r)}\, , \\
\end{split}
\end{equation}
and we ``prepare'' the electron cloud so that $\rho_\text{lte}=\text{const}$. Then, we let the system evolve according to \eqref{maxsphere}, with $\sigma_1=\text{const}$ and $\mu=-\mu_0+b^2\rho_\text{lte}$ ($\mu_0=\text{const}$, $b=0.1\, R_\star$).

As expected, the electrons spontaneously migrate to a solution of the Thomas-Fermi equation \eqref{debyeHuck}. The final configuration is a balance between two competing effects: the familiar tendency of charge to accumulate near the surface, within a layer of thickness $b$, due to electrostatic repulsion, and the opposing tendency of electrons to concentrate at the center under the influence of gravity.

\newpage
We emphasize that, since $\mu_0$ and $b$ are functions of the ion number density (see Supplementary Material), taking them to be constant is equivalent to assuming a uniform positive-ion density throughout the star, which is consistent with the constant-density idealization. In a more realistic model, the radial profiles $\mu_0(r)$ and $b(r)$ should instead be determined by the ion density obtained from solving the Tolman--Oppenheimer--Volkoff equations for the equation of state of interest.




\begin{figure}[h!]
    \centering
\includegraphics[width=0.97\linewidth]{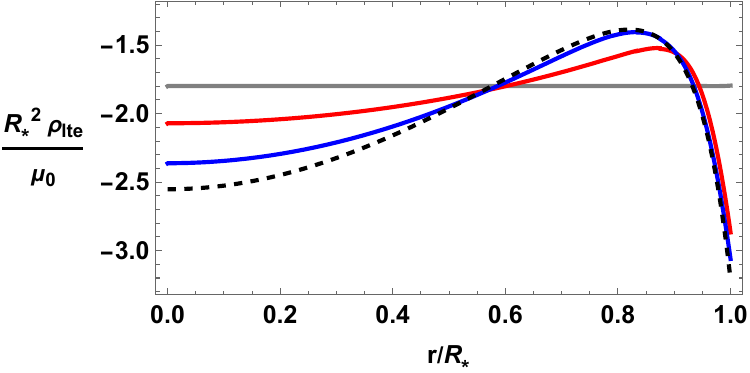}
\caption{Density profile of a solution of \eqref{maxsphere} with $\rho(t{=}0)=\text{const}$ and $\mathcal{E}_r(r{=}R_\star)=\text{const}$ (no charge escapes the star), both adjusted so that $Q/(R_\star\mu_0)=-10$. The curves represent snapshots at $\sigma_1 t=0$ (gray), $1$ (red), $3$ (blue), and $\infty$ (dashed). The dashed line also solves \eqref{debyeHuck}.}
    \label{fig:compactness}
\end{figure}

\newpage
\noindent\textit{\bf Conclusions -} We have presented a robust framework for modeling electrothermal effects in strong gravitational fields, providing a unified description of diverse phenomena, ranging from the magnetothermal evolution of neutron stars to the electric stratification of quark stars. Our framework can be straightforwardly extended to incorporate bound charges, neutrino luminosities, and anisotropies arising from strong magnetic fields or the solid lattice. The modified equations with a strong magnetic field are provided in the Supplementary Material.

Beyond its astrophysical relevance, the theory is of fundamental interest even in the context of inertial systems in flat spacetime. In fact, classical electrodynamics is the archetype of a causal, special-relativistic field theory with hyperbolic equations of motion. Coupling it to an acausal parabolic matter sector (as is done in textbooks) breaks this link with relativity. Our framework restores full consistency, while still reproducing the known phenomenology of conductors in the appropriate limits.

Owing to the assumption of Born rigidity, this theory applies most naturally to solids. It can nevertheless be viewed as a dynamical sector of resistive magnetohydrodynamics, describing thermoelectric processes in which the fluid motion is approximated by a Killing flow.

\section*{Acknowledgements}

This work is supported by a MERAC Foundation prize grant,  an Isaac Newton Trust Grant, and funding from the Cambridge Centre for Theoretical Cosmology.

\bibliography{Biblio}

\newpage

\onecolumngrid
\newpage
\begin{center}
\textbf{\large Thermoelectric conduction in General Relativity: a causal, stable, and well-posed theory\\Supplementary Material}\\[.2cm]
  L. Gavassino\\[.1cm]
  {\itshape Department of Applied Mathematics and Theoretical Physics, University of Cambridge, Wilberforce Road, Cambridge CB3 0WA, United Kingdom\\}
\end{center}

\setcounter{equation}{0}
\setcounter{figure}{0}
\setcounter{table}{0}
\setcounter{page}{1}
\renewcommand{\theequation}{S\arabic{equation}}
\renewcommand{\thefigure}{S\arabic{figure}}

\section*{Stability Theorem}

Here, we prove the following theorem.
\begin{theorem}
Let $\Pi$ and $\Lambda$ be two Hermitian non-negative definite matrices of the same size, and let $k$ be a real number. Then, the solutions of
\begin{equation}\label{theor2one}
[\Gamma^2+(\Pi{+}\Lambda) \Gamma +\Pi \Lambda+k^2]\Upsilon =0 \spc (\text{with } \Upsilon\neq 0, \, \, \Gamma\in \mathbb{C})
\end{equation}
are such that $\mathfrak{Re}\Gamma\leq 0$.
\end{theorem}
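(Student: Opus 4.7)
The plan is to convert the quadratic matrix pencil \eqref{theor2one} into a standard linear eigenvalue problem on a doubled space, and then exhibit a positive-definite Lyapunov weight relative to which the associated linear operator is dissipative. The algebraic fact that makes this possible is that $\Gamma$ is a scalar, so despite $\Pi$ and $\Lambda$ generally not commuting, the pencil still factorises as
\begin{equation*}
\Gamma^{2}I+(\Pi{+}\Lambda)\Gamma+\Pi\Lambda+k^{2}I=(\Gamma I{+}\Pi)(\Gamma I{+}\Lambda)+k^{2}I,
\end{equation*}
and this identity is the first thing I would record.

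Next I would introduce the auxiliary vector $\Phi=(\Gamma I+\Lambda)\Upsilon$, which recasts \eqref{theor2one} as the ordinary linear eigenvalue problem
\begin{equation*}
MV=\Gamma V,\quad V=\begin{pmatrix}\Upsilon\\ \Phi\end{pmatrix},\quad M=\begin{pmatrix}-\Lambda & I\\ -k^{2}I & -\Pi\end{pmatrix}.
\end{equation*}
Because $\Upsilon\neq 0$, the pair $V$ is a genuine nonzero eigenvector of $M$, so it is enough to prove that every eigenvalue of $M$ has non-positive real part. To that end I would pick the Hermitian weight $H=\bigl(\begin{smallmatrix}k^{2}I & 0\\ 0 & I\end{smallmatrix}\bigr)$ and compute, using $\Pi^{\dagger}=\Pi$ and $\Lambda^{\dagger}=\Lambda$,
\begin{equation*}
M^{\dagger}H+HM=\begin{pmatrix}-2k^{2}\Lambda & 0\\ 0 & -2\Pi\end{pmatrix},
\end{equation*}
in which the off-diagonal cross terms $\pm k^{2}I$ cancel exactly. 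Since $\Pi,\Lambda$ are non-negative definite and $k^{2}\geq 0$, the right-hand side is non-positive. For $k\neq 0$ the weight $H$ is strictly positive definite, so contracting with $V$ gives $2(\mathfrak{Re}\,\Gamma)\langle V,HV\rangle\leq 0$ and hence $\mathfrak{Re}\,\Gamma\leq 0$. The degenerate case $k=0$ is handled separately: the factorisation collapses to $(\Gamma+\Pi)(\Gamma+\Lambda)\Upsilon=0$, which forces $-\Gamma$ to be an eigenvalue of either $\Lambda$ or $\Pi$, so again $\Gamma\leq 0$.

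The main obstacle is that the naive attempt — pairing \eqref{theor2one} with $\Upsilon$ and using the quadratic formula — fails, because $\Pi\Lambda$ is not Hermitian whenever $\Pi$ and $\Lambda$ do not commute, so the coefficient $\langle\Upsilon,\Pi\Lambda\Upsilon\rangle$ is complex and the standard Hermitian-pencil argument breaks down. The block construction bypasses this by replacing the single non-Hermitian product $\Pi\Lambda$ by its two Hermitian factors acting in separate slots of the doubled space; the cleverness is then just to weight those two slots so that the off-diagonal contributions to $M^{\dagger}H+HM$ annihilate each other.
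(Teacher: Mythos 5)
Your proof is correct and is essentially the paper's own argument in different clothing: the factorisation $(\Gamma+\Pi)(\Gamma+\Lambda)+k^2$ and the auxiliary vector $\Phi=(\Gamma+\Lambda)\Upsilon$ are exactly the paper's steps, and contracting your Lyapunov identity $M^\dagger H+HM=\mathrm{diag}(-2k^2\Lambda,-2\Pi)$ with $V$ reproduces verbatim the paper's formula $\mathfrak{Re}\,\Gamma=-(\Phi^\dagger\Pi\Phi+k^2\Upsilon^\dagger\Lambda\Upsilon)/(\Phi^\dagger\Phi+k^2\Upsilon^\dagger\Upsilon)$. The only substantive difference is that you explicitly dispose of the degenerate case $k=0$, $\Phi=0$ (where that denominator vanishes), which the paper passes over in silence — a small but genuine improvement in rigour.
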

\begin{proof}
Let us first note that equation \eqref{theor2one} can be rearranged as follows:
\begin{equation}\label{bobrego}
[(\Gamma {+}\Pi)(\Gamma {+}\Lambda)+k^2]\Upsilon =0\, .
\end{equation}
Define $\Tilde{\Upsilon}=(\Gamma{+}\Lambda)\Upsilon$, and multiply both sides of \eqref{bobrego} by $\Tilde{\Upsilon}^\dagger=\Upsilon^\dagger (\Gamma^*{+}\Lambda)$, which gives
\begin{equation}\label{thepro}
\Tilde{\Upsilon}^\dagger (\Gamma {+}\Pi)\Tilde{\Upsilon}+k^2\Upsilon^\dagger (\Gamma^* {+}\Lambda)\Upsilon =0\, .   
\end{equation}
Now, let us note that $\Tilde{\Upsilon}^\dagger\Tilde{\Upsilon}$, $\Tilde{\Upsilon}^\dagger\Pi \Tilde{\Upsilon}$, $\Upsilon^\dagger \Upsilon$, and $\Upsilon^\dagger \Lambda \Upsilon$ are all real and non-negative. Hence, if we take the real part of \eqref{thepro}, and isolate $\mathfrak{Re}\Gamma$, we obtain
\begin{equation}
\mathfrak{Re}\Gamma= -\dfrac{\Tilde{\Upsilon}^\dagger \Pi\Tilde{\Upsilon}+k^2 \Upsilon^\dagger \Lambda \Upsilon}{\Tilde{\Upsilon}^\dagger\Tilde{\Upsilon}+k^2 \Upsilon^\dagger \Upsilon} \leq 0\, ,
\end{equation}
which is what we wanted to prove.
\end{proof} 

\noindent\textbf{Remark:} Note that we did not specify the dimension of the matrices. Hence, the stability result would remain valid even if we added additional chemical potentials that undergo diffusion and couple to $\mu$ and $T$.\\



\newpage
\section*{Dispersion relations of the theory: Structure and Interpretation}

We have shown that our theory, when linearized around constant (charge-neutral) states, is stable. However, we did not yet analyze the dispersion relations of the quasinormal modes in detail. Here, we do it.

\vspace{-0.1cm}
\subsection{Global mode structure}
\vspace{-0.1cm}

Let us first analyze the transversal modes propagating in the $z$ direction, which are governed by the equations
\begin{equation}\label{cattenoB}
\partial_t^2 \mathcal{B}_x +\sigma_1 \partial_t \mathcal{B}_x -\partial^2_z \mathcal{B}_x =0 \, , \spc  \partial_t^2 \mathcal{B}_y +\sigma_1 \partial_t \mathcal{B}_y -\partial^2_z \mathcal{B}_y =0 \, .
\end{equation}
Under a plane wave assumption ($\delta\Psi\propto e^{ikx-i\omega t}$), we obtain the relation $\omega^2+i\sigma_1\omega-k^2=0$, whose solutions are plotted in figure \ref{fig:transmodes}, left panel. The interpretation is standard: At high $k$, we have 4 distinct electromagnetic waves (two left-moving polarizations and two right-moving polarizations) that travel at speed $1$, and decay over a timescale $2/\sigma_1$. However, when the wavelength $k^{-1}$ becomes longer than $2/\sigma_1$, the waves stop propagating, and they just decay. In the limit $k\rightarrow 0$, we have two diffusive modes (one for each transversal component of $\mathcal{B}$) with $\omega=-ik^2/\sigma_1$, and two non-hydrodynamic modes that decay over a timescale $1/\sigma_1$.

Let us now turn our attention to the 4 longitudinal modes. These modes describe the relaxation of the degrees of freedom $\{\delta T,\partial_t \delta T,\delta \mu,\partial_t \delta \mu\}$. Their dispersion relations $\omega(k)$ are solutions of the equation
\begin{equation}\label{deTTo}
\det[(-i\omega {+}\Pi)(-i\omega {+}\Lambda)+k^2]
= 0 \, , \spc \text{with }
\Lambda=
\begin{bmatrix}
\lambda_1 & 0 \\
0 & \lambda_2 \\
\end{bmatrix}\, ,
\quad \Pi =
\begin{bmatrix}
A & B \\
B & C \\
\end{bmatrix}\, , 
\end{equation}
where $\lambda_1$, $\lambda_2$, $A$, and $C$ are non-negative, and $B=\pm \sqrt{AC}$ (since $\det \Pi=0$). 
Since the polynomial is of $4^{\text{th}}$ degree in $\omega$, there is no simple, universal, mode structure. Instead, many mode configurations are possible, depending on the values of the coefficients. The plot in figure \ref{fig:transmodes} (right panel) is just one example. The asymptotics are, however, universal. In particular, at high $k$, we always have two right-moving waves and two left-moving waves, all of which propagate at speed $1$. At small $k$, there are always three gapped modes, and only one gapless mode. This follows from the fact that \eqref{deTTo} becomes $\det(-i\omega+\Pi)\det(-i\omega+\Lambda)=0$, whose roots are $\omega=-i\lambda_{1/2}$ (both of which are non-zero) and $\omega=-i\times \text{``eigenvalues of }\Pi\text{''}$, one of which is always zero (since $\det\Pi=0$) and the other of which is always $\sigma_1$ (since $\text{Tr}\, \Pi=\sigma_1$). The above analysis also reveals that, at $k=0$, $\omega$ is always imaginary (since $\Pi$ and $\Lambda$ are symmetric). This differentiates our theory from the Israel-Stewart-Maxwell model \cite{Dash:2022xkz,GavassinoShokryMHD:2023qnw} and from kinetic theory, which were found to undergo electrically-driven (plasma) oscillations in certain regimes at zero wavenumber \cite{,GavassinoPlasmaOscillations:2025tul}.

\begin{figure}[b!]
    \centering
\includegraphics[width=0.4\linewidth]{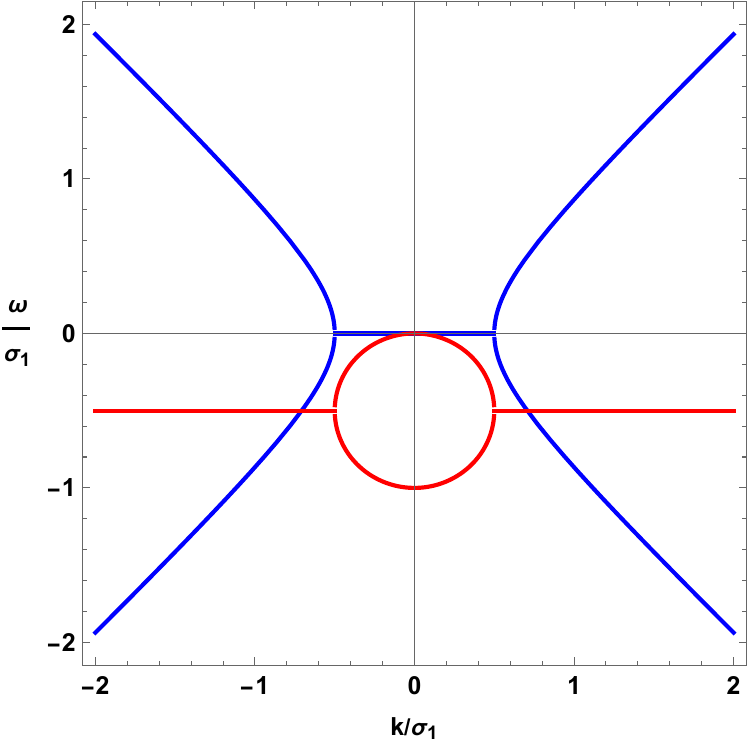}
\includegraphics[width=0.4\linewidth]{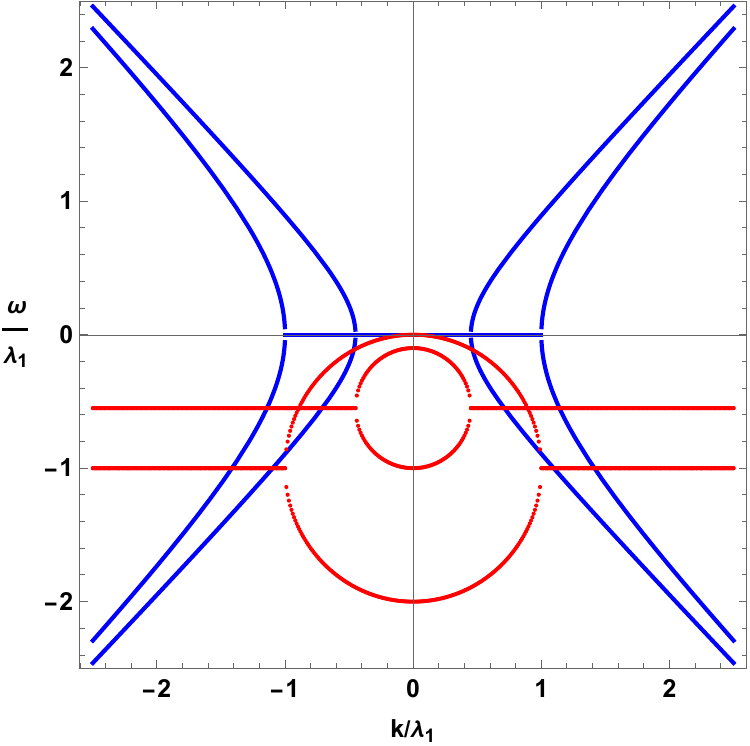}
\caption{Dispersion relations of the transversal modes (left panel) and of the longitudinal modes (right panel) of the theory. The blue lines are $\mathfrak{Re}\,\omega$, and the red lines are $\mathfrak{Im}\,\omega$. Note that the structure of the transversal modes is universal, while the structure of the longitudinal modes (e.g. the shape, relative size, and relative position of the red circles) may vary depending on the parameters.}
    \label{fig:transmodes}
\end{figure}

\subsection{The gapless mode describes heat transport}
\vspace{-0.1cm}

We found that there exists a single longitudinal gapless mode. Let us now examine its physical interpretation. 

Let $\{\Gamma(k),\Upsilon(k)\}$ be the eigen-rate and the eigen-solution that describe the diffusive mode, and differentiate the equation of motion \eqref{theor2one} along such eigen-couple. The zeroth, first, and second derivatives of the equation are
\begin{equation}
\begin{split}
& [\Gamma^2+(\Pi{+}\Lambda) \Gamma +\Pi \Lambda+k^2]\Upsilon =0 \, ,\\
& [\Gamma^2+(\Pi{+}\Lambda) \Gamma +\Pi \Lambda+k^2]\Upsilon' +[2\Gamma\Gamma'+(\Pi{+}\Lambda) \Gamma'+2k]\Upsilon =0\, ,\\
& [\Gamma^2+(\Pi{+}\Lambda) \Gamma +\Pi \Lambda+k^2]\Upsilon'' +2[2\Gamma\Gamma'+(\Pi{+}\Lambda) \Gamma'+2k]\Upsilon'+[2\Gamma\Gamma''+2(\Gamma')^2+(\Pi{+}\Lambda) \Gamma''+2]\Upsilon =0\, .\\
\end{split}
\end{equation}
Evaluating the above equations at $k=0$, and considering that $\Gamma(0)=\Gamma'(0)=0$ (the mode is diffusive), we obtain
\begin{equation}\label{zerando}
\begin{split}
& \Pi \Lambda\Upsilon(0) =0 \, ,\\
& \Pi \Lambda\Upsilon'(0) =0\, ,\\
& \Pi \Lambda\Upsilon''(0) +[(\Pi{+}\Lambda) \Gamma''(0)+2]\Upsilon(0) =0\, .\\
\end{split}
\end{equation}
Now, let us note that, if $\Pi\Lambda\Upsilon(0)$ vanishes, then also $\Upsilon^\dagger(0) \Lambda\Pi$ vanishes. Hence, we can multiply the third line of \eqref{zerando} by $\Upsilon^\dagger \Lambda$, and we obtain
\begin{equation}\label{fgg}
\dfrac{1}{2}\Gamma''(0)=-\dfrac{\Upsilon^\dagger(0) \Lambda \Upsilon(0)}{\Upsilon^\dagger(0) \Lambda^2 \Upsilon(0)}\, .
\end{equation}
These results are not particularly enlightening because they are expressed in terms of $\Upsilon$. If we express them in terms of $(\delta\mu,\delta T)$, we learn much more. In particular, the statement $\Pi \Lambda\Upsilon(0) =0$ becomes $\delta \rho_\text{lte}(0)=0$, which tells us that the diffusive mode describes pure heat transfer (as it carries no electric charge), while \eqref{fgg} becomes
\begin{equation}
\omega=-i\dfrac{\partial T}{\partial s}\bigg|_\rho \left( \sigma_5 -\dfrac{\sigma_3^2}{\sigma_1}\right)k^2+\mathcal{O}(k^4) \, ,
\end{equation}
which tells that the combination $T\sigma_5-T\sigma_3^2/\sigma_1(\equiv\kappa_3-\mu\sigma_3-T\sigma_3^2/\sigma_1)$ may be interpreted as the heat conductivity.

\vspace{-0.1cm}
\subsection{Origin of the gap of the charge-diffusion mode}
\vspace{-0.1cm}

Ordinarily, a conserved current gives rise to a gapless diffusive mode. The electric current, however, is an exception. To see this, start from the conservation law $\partial_t\delta J^0+\partial_z \delta J^z=0$, and take the long-wavelength limit ($k\to 0$). In this regime, the constitutive relation for the current reduces to $\delta J^z \approx \sigma_1 \mathcal{E}^z$, so the charge conservation equation becomes $0=\partial_t\delta J^0+\sigma_1\partial_z \mathcal{E}^z=\partial_t\delta J^0+\sigma_1 \delta J^0$.
Its solution is a non-hydrodynamic mode with a finite gap, $\omega(k{=}0)=-i\sigma_1$. As can be seen, the origin of this gap is the Maxwell equation $\partial_j \mathcal{E}^j=\delta J^0$, which causes the electric field (and thus also the current) to blow up like $1/k$ in the infrared limit.

\vspace{-0.1cm}
\subsection{Which modes are physical?}
\vspace{-0.1cm}

Since our theory is obtained through a formal expansion in $\tau$, its solutions can be trusted only if their Fourier support remains sufficiently close to the origin, namely when $\tau |k|\ll 1$ and $\tau |\omega|\ll 1$. This provides a simple criterion to distinguish between the physical modes (i.e. those that fall within the validity regime of theory) and the unphysical ones. Let us now examine what this criterion implies.

Consider the left panel of figure \ref{fig:transmodes}, which depicts the transverse modes. The red circle has a diameter of $\sigma_1$, along both the $k$ and $\omega$ axes. To compare this scale with $\tau$, we recall the Drude relation $\sigma_1 = n_e e^2 \tau/m$, where $n_e$ and $m$ denote the electron density and mass. From this, we obtain the estimate
\begin{equation}\label{tausigma}
\tau \sigma_1 \sim \frac{n_e e^2}{m} \tau^2 = \omega_p^2 \tau^2 \, ,
\end{equation}
with $\omega_p = \sqrt{n_e e^2/m}$ the plasma frequency.
Thus, if the medium satisfies $\omega_p \tau \ll 1$, both the hydrodynamic mode (upper half of the circle) and the non-hydrodynamic mode (lower half of the circle) lie within the domain of validity of the first-order theory and are therefore physical. In this regime, the electronic degrees of freedom are strongly damped, and the coupled Maxwell-matter system is accurately captured by the present framework, even at frequencies $\omega \gtrsim \sigma_1$ (compare with figure 4 of \cite{GavassinoPlasmaOscillations:2025tul}, left panel). Notably, our framework also correctly captures the propagation of electromagnetic waves with $k\gtrsim \sigma_1/2$.
\newpage

Conversely, when $\omega_p \tau \gg 1$, only a small portion of the hydrodynamic branch, close to the origin in frequency-momentum space, remains within the validity domain of the theory. This infrared sector is governed by an ordinary diffusion equation, $\partial_t \mathcal{B}_{x}\approx-\partial_z^2\mathcal{B}_{x}$
(and similarly for $\mathcal{B}_y$).
The remainder of the spectrum is instead characterized by underdamped plasma oscillations, which fall outside the applicability of a first-order description. Accurately modeling this regime (and, in particular, the propagating electromagnetic waves) requires the use of Israel-Stewart-type theories, which have been shown to correctly capture this kind of oscillatory dynamics \cite{GavassinoPlasmaOscillations:2025tul}.

Let us now focus on the longitudinal modes (right panel of figure \ref{fig:transmodes}). The uppermost point of the smaller circle is the charge-diffusion mode, which has a distance $\sigma_1$ from the origin. Hence, invoking again the estimate \eqref{tausigma}, we conclude that the charge-diffusion model is physical if $\omega_p\tau \ll 1$. As for the lowermost points of the two circles, these are located at distances $\lambda_1$ and $\lambda_2$ from the origin. A quick estimate then gives
\begin{equation}
\tau \lambda_i \sim \dfrac{\partial \rho}{\partial\mu} \dfrac{\tau}{\sigma_1}\sim \dfrac{1}{b^2 \omega_p^2}\sim \dfrac{1}{v_e^2}\geq 1 \, ,
\end{equation}
where $b$ is the Thomas-Fermi length and $v_e\leq 1$ the typical speed of the electrons. We conclude that the lower halves of the red circles always lie outside the regime of validity of the theory. This is expected, since these modes are artifacts of our choice of hydrodynamic frame (i.e. of how we define $T$ and $\mu$ out of equilibrium \cite{Kovtun2019}). In the Eckart frame \cite{Eckart40}, which is acausal (but is the conventional frame in non-relativistic theories \cite[\S 26]{landau8}\cite[\S 21.C]{SommerfeldStatmech}), such modes do not appear. Indeed, in the Newtonian limit ($v_e\ll 1$), these frame-dependent modes are pushed infinitely far from the origin, implying that they decay infinitely fast. In this way, we effectively recover the Eckart frame.
\newpage
\section*{Relationship between charge density and charge chemical potential in metals}

We consider a solid lattice of ions with atomic number $Z$, filled with freely moving electrons. The differential of the energy density $\varepsilon$ (dropping the ``lte'' subscripts) reads $d\varepsilon =T ds+ \mu_I dn_I +\mu_e dn_e$,
where $n_I$ and $n_e$ are the number densities of ions and electrons, while $\mu_I$ and $\mu_e$ are the respective chemical potentials. Since the lattice is assumed rigid and with uniform spacing, $n_I$ is effectively a constant (so $dn_I=0$). Moreover, the charge density is 
\begin{equation}\label{rhoe}
\rho=eZn_I-en_e  \, ,  
\end{equation}
and so we have $d\rho=-e dn_e$. Hence, we are left with $d\varepsilon =T ds -(\mu_e/e) d\rho$, which immediately yields 
\begin{equation}\label{mue}
\mu=-\mu_e/e\, .
\end{equation}
But at low temperatures, $\mu_e$ coincides with the Fermi energy of the electrons. Hence, working in the free electron model, which treats the electrons as a degenerate non-relativistic Fermi gas, we have
\begin{equation}\label{mueee}
\mu_e =m+\dfrac{1}{2m} (3\pi^2 n_e)^{2/3} \, ,
\end{equation}
with $m$ the mass of the electron. Combining \eqref{rhoe}, \eqref{mue} and \eqref{mueee}, we obtain 
\begin{equation}
\mu =-\dfrac{m}{e} -\dfrac{1}{2me} \left[3\pi^2 Zn_I\left( 1-\dfrac{\rho}{eZn_I}\right) \right]^{2/3}\, .
\end{equation}
Taylor expanding to first order in $\rho$, we recover the equation $\mu=-\mu_0+b^2\rho+\mathcal{O}(\rho^2)$, with 
\begin{equation}
\begin{split}
\mu_0 ={}& \dfrac{m}{e}+\dfrac{1}{2me} (3\pi^2 Zn_I)^{2/3} \, , \\
b^2={}& \dfrac{\pi^{4/3}}{me^2 (3Z n_I)^{1/3}}\, .
\end{split}
\end{equation}
It is immediate to verify that $b$ is just the usual Thomas-Fermi length scale in conductors.

\newpage
\section*{Strong magnetic fields}

In the presence of strong magnetic fields, nothing fundamental changes: the currents $\{\K J^\mu,W^\mu\}$ still contain gradient terms proportional to $\{\nabla_\nu(\K\mu){-}\K \mathcal{E}_\nu,\nabla_\nu(\K T)\}$. However, we can now use the field 
\begin{equation}
\mathcal{B}^\mu =-\dfrac{1}{2} \varepsilon^{\mu\alpha\beta\nu}F_{\alpha\beta}u_\nu 
\end{equation}
to build additional tensor structures. Specifically, we have the following constitutive relations ($\Hat{\mathcal{B}}^\mu=\mathcal{B}^\mu/\sqrt{\mathcal{B}^\alpha\mathcal{B}_\alpha}$):
\begin{equation}
\begin{split}
\K J^\mu {=}\rho_{\text{lte}} \K^\mu & {-}(\sigma_1 g^{\mu \nu}{+}\sigma_2 u^\mu u^\nu{+}\sigma_\mathcal{B} \Hat{\mathcal{B}}^\mu \Hat{\mathcal{B}}^\nu{+}\sigma_H \varepsilon^{\mu \nu \alpha\beta}u_\alpha\Hat{\mathcal{B}}_\beta) [\nabla_\nu (\K\mu){-}\K\mathcal{E}_\nu] \\
& {-}(\sigma_3 g^{\mu \nu}{+}\sigma_4 u^\mu u^\nu{+}\sigma'_\mathcal{B} \Hat{\mathcal{B}}^\mu \Hat{\mathcal{B}}^\nu{+}\sigma'_H \varepsilon^{\mu \nu \alpha\beta}u_\alpha\Hat{\mathcal{B}}_\beta) \nabla_\nu (\K T) {+}\mathcal{O}(\tau^2) , \\
W^\mu {=}\varepsilon_{\text{lte}} \K^\mu & {-}(\kappa_1 g^{\mu \nu}{+}\kappa_2 u^\mu u^\nu{+}\kappa_\mathcal{B} \Hat{\mathcal{B}}^\mu \Hat{\mathcal{B}}^\nu{+}\kappa_H \varepsilon^{\mu \nu \alpha\beta}u_\alpha\Hat{\mathcal{B}}_\beta) [\nabla_\nu (\K\mu){-}\K \mathcal{E}_\nu] \\
& {-}(\kappa_3 g^{\mu \nu}{+}\kappa_4 u^\mu u^\nu{+}\kappa'_\mathcal{B} \Hat{\mathcal{B}}^\mu \Hat{\mathcal{B}}^\nu{+}\kappa'_H \varepsilon^{\mu \nu \alpha\beta}u_\alpha\Hat{\mathcal{B}}_\beta) \nabla_\nu (\K T) {+}\mathcal{O}(\tau^2). \\
\end{split}
\end{equation}
The coefficients with subscript $\mathcal{B}$ describe anisotropies arising from the fact that the electrons move more easily along the magnetic field lines than orthogonally to it \cite{Aguilera:2007xk}. The coefficients with subscript $H$ describe Hall-related effects. Note that additional tensor structures such as $u^\mu \Hat{\mathcal{B}}^\nu$ are assumed negligible (if not outright zero) because they violate parity.
As before, one can use field redefinitions to fix the values of $\sigma_2$, $\sigma_4$, $\kappa_2$, and $\kappa_4$ in a way to make the theory causal. In particular, we need to adjust them so that the roots of the characteristic determinant
\begin{equation}
\det\left\{
\begin{bmatrix}
\sigma_1 g^{\mu \nu}{+}\sigma_2 u^\mu u^\nu{+}\sigma_\mathcal{B} \Hat{\mathcal{B}}^\mu \Hat{\mathcal{B}}^\nu{+}\sigma_H \varepsilon^{\mu \nu \alpha\beta}u_\alpha\Hat{\mathcal{B}}_\beta & &  \sigma_3 g^{\mu \nu}{+}\sigma_4 u^\mu u^\nu{+}\sigma'_\mathcal{B} \Hat{\mathcal{B}}^\mu \Hat{\mathcal{B}}^\nu{+}\sigma'_H \varepsilon^{\mu \nu \alpha\beta}u_\alpha\Hat{\mathcal{B}}_\beta \\  
\kappa_1 g^{\mu \nu}{+}\kappa_2 u^\mu u^\nu{+}\kappa_\mathcal{B} \Hat{\mathcal{B}}^\mu \Hat{\mathcal{B}}^\nu{+}\kappa_H \varepsilon^{\mu \nu \alpha\beta}u_\alpha\Hat{\mathcal{B}}_\beta & & \kappa_3 g^{\mu \nu}{+}\kappa_4 u^\mu u^\nu{+}\kappa'_\mathcal{B} \Hat{\mathcal{B}}^\mu \Hat{\mathcal{B}}^\nu{+}\kappa'_H \varepsilon^{\mu \nu \alpha\beta}u_\alpha\Hat{\mathcal{B}}_\beta\\
\end{bmatrix}\xi_\mu \xi_\nu\right\}=0 
\end{equation}
are such that the ``frequency'' $u^\mu \xi_\mu$ always falls within the interval $[-1,1]$ for all real ``wavevectors'' $(g^{\mu \nu}{+}u^\mu u^\nu)\xi_\nu$ with length 1 (note that all Hall contributions to the determinant vanish since $\varepsilon^{\mu \nu\alpha\beta}\xi_\mu \xi_\nu=0$). For illustration, let us consider the extreme case where the magnetic field is so large that all transport occurs along the field lines, so that $\sigma_1{=}\sigma_3{=}\kappa_1{=}\kappa_3{=}0$ \cite{Chandra:2015iza}. Then, we can just set $\sigma_2{=}{-}\sigma_\mathcal{B}$, $\sigma_4{=}{-}\sigma'_\mathcal{B}$, $\kappa_2{=}{-}\kappa_\mathcal{B}$, and $\kappa_4{=}{-}\kappa'_\mathcal{B}$, and the determinant becomes
\begin{equation}
\left[(-u^\mu u^\mu +\Hat{\mathcal{B}}^\mu\Hat{\mathcal{B}}^\nu)\xi_\mu \xi_\nu\right]^2 \det
\begin{bmatrix}
\sigma_\mathcal{B} & &  \sigma'_\mathcal{B} \\  
\kappa_\mathcal{B} & & \kappa'_\mathcal{B}\\
\end{bmatrix}=0 \, .
\end{equation}
Assuming that $\sigma_\mathcal{B}\kappa'_\mathcal{B}-\sigma'_\mathcal{B}\kappa_\mathcal{B}\neq 0$, we find that $(u^\mu \xi_\mu)^2=(\Hat{\mathcal{B}}^\mu \xi_\mu)^2 \leq 1$, i.e. causality.

\label{lastpage}

\end{document}